\documentclass[letterpaper, 10 pt, conference]{ieeeconf}  

\usepackage{multicol}
\usepackage[bookmarks=true]{hyperref}
\IEEEoverridecommandlockouts  
\usepackage{cite}
\usepackage{amsmath,amssymb,amsfonts, dsfont}
\usepackage{algorithmic}
\usepackage{graphicx}
\usepackage{textcomp}
\usepackage{setspace}
\usepackage{booktabs}

\usepackage{epsfig}
\usepackage{times} 
\usepackage{svg}
\usepackage[linesnumbered,ruled,vlined]{algorithm2e} 

\usepackage[nolist, nohyperlinks]{acronym}

\usepackage{bm}
\usepackage{breqn}
\usepackage{amsmath}
\usepackage{amssymb}
\usepackage{tabularx}

\usepackage{amsthm}

\newcommand{\reals}{\mathbb{R}}
\newcommand{\R}{\reals}
\newcommand{\Rnonneg}{\reals_{\geq 0}}

\newcommand{\naturals}{\mathbb{N}}
\newcommand{\N}{\naturals}

\newcommand{\Dcal}{\mathcal{D}}

\newcommand{\Scal}{\mathcal{S}}

\newcommand{\Ucal}{\mathcal{U}}

\newcommand{\Xcal}{\mathcal{X}}

\newcommand{\eqn}[1]{\begin{align} #1 \end{align}}

\theoremstyle{plain}
\newtheorem{theorem}{Theorem}

\newtheorem{problem}{Problem}
\newtheorem{definition}{Definition}

\theoremstyle{definition}
\newtheorem{assumption}{Assumption}
\newtheorem{remark}{Remark}

\theoremstyle{remark}

\makeatletter
\let\NAT@parse\undefined
\makeatother
\usepackage{hyperref}
\hypersetup{
colorlinks, 
    linkcolor={red!50!black},
    citecolor={blue!50!black},
    urlcolor={blue!80!black}
}
\usepackage{cleveref}

\crefformat{problem}{Problem~#2#1#3}
\crefformat{assumption}{Assumption~#2#1#3}

\usepackage[nolist,nohyperlinks]{acronym}

\acrodef{LIP}[LIP]{linear-in-parameter}
\acrodef{OCP}[OCP]{online conformal prediction}
\acrodef{DNN}[DNN]{deep neural network}
\acrodef{CBF}[CBF]{control barrier function}
\acrodef{MPC}[MPC]{model predictive control}
\acrodef{SIOCP}[SI-OCP]{staggered integral online conformal prediction}

\usepackage{tikz}

\title{\LARGE \bf Staggered Integral Online Conformal Prediction for Safe Dynamics Adaptation with Multi-Step Coverage Guarantees 
}

\crefname{definition}{Def.}{Defs.}

\author{Daniel M. Cherenson and Dimitra Panagou 
\thanks{This research was supported by the Center for Autonomous Air Mobility and Sensing (CAAMS), an NSF IUCRC, under Award Number 2137195, and an NSF CAREER under Award Number 1942907.}
\thanks{All authors are with the Robotics Department, University of Michigan, Ann Arbor, MI, USA
        {\tt \{dmrc, dpanagou\}@umich.edu}}%
}

\begin{document}

\maketitle
\begin{abstract}
 Safety-critical control of uncertain, adaptive systems often relies on conservative, worst-case uncertainty bounds that limit closed-loop performance. Online conformal prediction is a powerful data-driven method for quantifying uncertainty when truth values of predicted outputs are revealed online; however, for systems that adapt the dynamics without measurements of the state derivatives, standard online conformal prediction is insufficient to quantify the model uncertainty. We propose Staggered Integral Online Conformal Prediction (SI-OCP), an algorithm utilizing an integral score function to quantify the lumped effect of disturbance and learning error. This approach provides long-run coverage guarantees, resulting in long-run safety when synthesized with safety-critical controllers, including robust tube model predictive control. Finally, we validate the proposed approach through a numerical simulation of an all-layer deep neural network (DNN) adaptive quadcopter using robust tube \ac{MPC}, highlighting the applicability of our method to complex learning parameterizations and control strategies. \href{https://github.com/dcherenson/staggered-integral-ocp}{\textcolor{red}{[Code]}}\footnote{GitHub: \href{https://github.com/dcherenson/staggered-integral-ocp}{https://github.com/dcherenson/staggered-integral-ocp}}

\end{abstract}

\section{Introduction}
Online dynamics adaptation is a promising route to reducing conservatism in safety-critical control because it allows the prediction model to track time-varying disturbances and modeling error during execution. This is particularly valuable for predictive controllers, whose closed-loop performance depends critically on how tightly the future model mismatch can be bounded. Existing safe adaptive control methods have made important progress, including set-membership and zonotopic uncertainty estimation for adaptive control barrier functions and robust adaptive MPC constructions that exploit structural assumptions on the uncertainty~\cite{lopez2020robust,fan2020deep, black2021fixed,cohen2024uncertainty, tao2024robust}. These approaches provide meaningful guarantees, but they typically rely on a priori disturbance bounds, restrictive parameterizations, or distributional assumptions, all of which can become overly conservative when the model class is expressive and the environment evolves online.

Conformal prediction offers an attractive complementary viewpoint. Classical conformal prediction is distribution-free and model-agnostic, converting observed prediction errors into calibrated uncertainty sets~\cite{lindemann2023safe,angelopoulos2023conformal}. For streaming, temporally correlated data, \ac{OCP}, also referred to as adaptive conformal inference, extends this idea beyond exchangeable settings and provides long-run coverage guarantees for non-i.i.d. sequences~\cite{gibbs2021adaptive, angelopoulos2024online, areces2025online}. Recent work has successfully used \ac{OCP} in robotics and safe control to quantify the motion uncertainty of other agents, bound discrete disturbance uncertainty, and synthesize safe learning-based controllers~\cite{dixit2023adaptive, zhou2024safety, zhang2025safety, zhou2025computationally}. These results highlight why \ac{OCP} is appealing in adaptive control pipelines: it is lightweight, nonparametric, and compatible with complex learned predictors.

However, existing \ac{OCP} formulations are not directly suited to online dynamics adaptation for continuous-time systems. For safety-critical applications, one must quantify the residual dynamics error between the true system and the learned model, since this error determines the robustness margin required to guarantee safety. Standard \ac{OCP} would require a ground-truth label for this residual after each prediction. Such a label is not directly available online because evaluating the instantaneous residual requires the state derivative, whereas only state measurements are available. Thus, the central obstacle is not only distribution shift, but the \textbf{absence of directly observed labels} for the adaptive dynamics residual itself.

A second challenge is that the safety objective is inherently multi-step because the controller reasons over a finite prediction horizon. Existing \ac{OCP}-based safe control applications work well in their intended settings, but they only \textbf{provide one-step or instantaneous coverage guarantees}~\cite{dixit2023adaptive, zhou2024safety, zhang2025safety, zhou2025computationally}. In contrast, predictive safety-critical controllers require a robustness margin that remains valid over a future horizon, since recursive feasibility and safety depend on the accumulated effect of model error along the planned trajectory. While we leverage the adaptive model to predict the dynamics in continuous time, our contribution is the \textbf{multi-step conformal coverage} framework for a \emph{general class of adaptive models}. Specifically, we assess how well the adaptive model predicted the most recent horizon and use that delayed information to calibrate uncertainty over the next horizon.

In this paper, we address these two gaps with \textbf{Staggered Integral Online Conformal Prediction (SI-OCP)}, a distribution-free uncertainty quantification framework for safe online dynamics adaptation. Instead of relying on unavailable instantaneous labels, we propose a new integral non-conformity score computed from a rolling window of measured state, input, and parameter trajectories. This score is efficiently computed online and captures the lumped effect of approximation error, adaptation error, and exogenous disturbance. Because the label associated with a horizon-level prediction becomes available only after that horizon has elapsed, we maintain multiple conformal thresholds and update them in a staggered manner as delayed labels are revealed. 
At each time step, \acs{SIOCP} calibrates a robustness margin for the future horizon and provides asymptotic long-run multi-step coverage of the true disturbance.

When this conformalized margin is applied to a robust predictive safety-critical controller, the resulting closed-loop system achieves asymptotic long-run safety. The framework is agnostic to the underlying adaptation law and therefore applies broadly to online dynamics predictors, ranging from classical linear-in-parameter models~\cite{parikh2019integral, black2021fixed,cohen2024uncertainty,zhang2025safety} to recent all-layer \ac{DNN} adaptation schemes~\cite{patil2022deep,he2025self}. We validate the method in a challenging setting by coupling SI-OCP with robust tube \ac{MPC} for a quadcopter with an all-layer adaptive \ac{DNN} subject to time-varying wind and unmodeled aerodynamics~\cite{lopez2019dynamic, he2025self}.

The main contributions of this paper are as follows:

\begin{itemize}
\item We develop a conformal uncertainty quantification framework for online dynamics adaptation without requiring state derivative measurements or directly revealed residual labels, through the proposed integral non-conformity score.
\item We extend uncertainty quantification from one-step or instantaneous guarantees to multi-step horizon coverage by using staggered online conformal updates with delayed labels.
\item We integrate the resulting adaptive robustness margins with safety-critical control, yielding asymptotic long-run safety guarantees, and demonstrate the framework on full all-layer \ac{DNN} dynamics adaptation~\cite{he2025self} within robust tube \ac{MPC}~\cite{lopez2019dynamic}.
\end{itemize}

\section{Problem Formulation}
\label{sec:prob}


\subsection{System Dynamics}
 Consider a continuous-time nonlinear dynamical system:
\begin{equation}
    \dot{x}(t) = f(x(t), u(t)) + \Delta(t, x(t), u(t)), \label{eq:true_dyn}
\end{equation}
where $x(t) \in \Xcal \subset \mathbb{R}^n$ is the state and $u(t) \in \mathcal{U} \subset \mathbb{R}^m$ is the control input. The function $f: \Xcal \times \Ucal \to \mathbb{R}^n$ denotes the known nominal system dynamics. The term $\Delta: \mathbb{R}_{\ge 0} \times \Xcal \times \Ucal \to \mathbb{R}^n$ denotes the unknown unmodeled dynamics. Both $f$ and $\Delta$ are assumed to be locally Lipschitz continuous in all arguments. We assume that the state $x$ is measured, but the state derivative $\dot{x}$ is unavailable for measurement. We drop the time argument when obvious for brevity.

While the dynamics are continuous in time, the control and uncertainty quantification modules operate at discrete evaluation times $t_k = t_0 + k\Delta t$, $k \in \N$, where $\Delta t$ is the sampling period. We assume control inputs are applied via a zero-order hold, i.e., $u(t) = u(t_k)$ for $t \in [t_k, t_{k+1})$.

\subsection{Adaptation}
Since $\Delta$ is unknown, we approximate it with a known parametric function $F : \mathcal{X} \times \mathcal{U} \times \Theta \to \R^n$, Lipschitz in all arguments. $F$ is parameterized by $\hat\theta \in \Theta \subset \mathbb{R}^d$, which is updated using an adaptive law of the form:
\begin{equation}
    \dot{\hat{\theta}}(t) = \Gamma\big(x(t), u(t), \hat{\theta}(t)\big) ,
    \label{eq:adaptation}
\end{equation}
where $\Gamma: \Xcal \times \Ucal \times \Theta \to \mathbb{R}^d$ is a prescribed adaptation rule. In this paper, we do not constrain $F$ to be of a specific structure. Possible structures include a linear-in-parameter form typically found in adaptive control, $F(x,u,\theta) = \theta^\top\Phi(x,u)$, where $\Phi(x,u)$ is a known vector of basis functions~\cite{parikh2019integral,lavretsky2024robust}, or an all-layer adapted \ac{DNN} where $\theta = \operatorname{vec}(W_1, b_1, \ldots, W_L, b_L)$ and $W_\ell$ and $b_\ell$ are the weights and biases of the $\ell$-th layer~\cite{he2025self}. 

\begin{remark}
   The adaptation law in \eqref{eq:adaptation} covers a broad class of online learning strategies, including gradient-based adaptation~\cite{patil2022deep}, concurrent learning~\cite{chowdhary2010concurrent}, composite learning~\cite{o2022neural}, and recent meta-learned update rules~\cite{he2025self}. Our goal is not to restrict the structure of $\Gamma$, but to quantify the uncertainty of the resulting adaptive predictor.
\end{remark}

We assume that $\Gamma$ is designed so that the parameter estimate $\hat{\theta}(t)$ remains bounded within a compact set for all $t \ge t_0$. Methods for ensuring boundedness include spectral normalization for neural networks~\cite{shi2019neural}, $\sigma-$ or $e-$modifications, or the projection operator used in adaptive control~\cite{lavretsky2024robust}.

To facilitate the analysis of the uncertainty in our approximation, we rewrite the continuous-time dynamics \eqref{eq:true_dyn} as:
\begin{equation}
    \dot{x}(t) = \underbrace{f(x,u) + F(x,u,\hat{\theta})}_{f_{\rm nom}(x,u,\hat{\theta})} + \underbrace{\Delta(t,x,u) - F(x,u,\hat{\theta})}_{d(t,x,u,\hat{\theta})}, \label{eq:lumped_dyn}
\end{equation}
where $f_{\rm nom} : \mathcal{X} \times \mathcal{U} \times \Theta \to \R^n$ denotes the online-computable nominal dynamics and $d : \Rnonneg \times \Xcal \times \Ucal \times \Theta \to \mathbb{R}^n$ denotes the unknown residual disturbance. With a slight abuse of notation, we write $d(t)$ when the dependence on $(x, u, \hat{\theta})$ is clear.
 
\subsection{Online Conformal Prediction}
To quantify the uncertainty of the learned dynamics, we must analyze the accumulation of the residual $d(t)$ over a rolling horizon at each discrete evaluation time $t_k$. 

The sequence of disturbances encountered along a trajectory can be described as a realization of a complex, unknown stochastic process. Let $\mathcal{D}$ denote the unknown probability distribution governing the accumulation of the disturbance over a rolling horizon, conditioned on the state and control inputs. 

Existing uncertainty quantification methods often require assuming $\mathcal{D}$ follows a known parametric form (e.g., Gaussian)~\cite{fan2020deep} or that sequential samples are exchangeable~\cite{angelopoulos2023conformal}. 
Because $d(t)$ depends on unmodeled dynamics and evolving parameter estimates $\hat{\theta}$, the distribution $\mathcal{D}$ is inherently time-varying, non-stationary, and highly correlated with the system trajectories, which prevents the use of standard conformal prediction. As these assumptions are violated in online adaptation settings, we instead use \ac{OCP}, a distribution-free method that provides coverage guarantees without requiring any prior knowledge of $\mathcal{D}$ or assuming exchangeability, which makes it well suited for streaming data generated by adapting systems subject to arbitrary or complex distributions~\cite{angelopoulos2024online}.

All versions of conformal prediction make use of a non-conformity score $S_k$, $k \in \N$, as a metric for the error between a model's prediction and the ground truth. A large score indicates the prediction model is performing poorly. Given a user-specified failure probability $\alpha \in (0,1)$, OCP adapts a non-conformity score threshold $q_k$ using sub-gradient descent on the pinball loss $\ell_\alpha(r) = (1-\alpha)\max\{r,0\} + \alpha\max\{-r,0\}$~\cite{koenker1978regression}. The threshold is updated recursively:
\begin{equation}
    q_{k+1} = q_k + \eta_k (\mathds{1}(S_k > q_k) - \alpha) ,\label{eq:ocp}
\end{equation}
where $\eta_k$ is a positive sequence of step sizes. The purpose of $q_{k+1}$ is to correctly threshold the predicted next value of the score, $S_{k+1}$. The type of guarantee that follows from \ac{OCP} is asymptotic, but is valid for arbitrary, possibly adversarial, sequence of scores bounded by a finite constant $B >0 $.
If the step size $\eta_k$ is non-increasing, the empirical coverage error decreases with iterations $k$:
\begin{equation}
    \left| \frac{1}{K}\sum_{k=1}^K \mathds{1}(S_k \le q_k) - (1-\alpha) \right| \le \frac{B + \eta_1}{\eta_K K}, \label{eq:ocp_bound}
\end{equation}
Taking the limit as $K\to\infty$ yields what we refer to as the \textbf{asymptotic long-run coverage guarantee}:
\begin{equation}
    \lim_{K\to\infty}\frac{1}{K}\sum_{k=1}^K \mathds{1}(S_k \le q_k) = 1-\alpha \label{eq:ocp_guarantee}
\end{equation}

The score function we introduce in \Cref{sec:method} relies on observing the continuous-time variables over a rolling window, formalized below.

\begin{definition}[Rolling History Stack]
    The history stack $\mathcal{H}_{T_p}(t_k)$ contains the continuous trajectories of $x(\tau)$, $u(\tau)$, and $\hat{\theta}(\tau)$ over the time interval $\tau \in [t_k - T_p, t_k]$. 
\end{definition}

\subsection{Safety-Critical Controller}
Given a feedback controller $\pi : \mathcal{X} \times \Theta \to \mathcal{U}$, the closed-loop system is given by \begin{subequations}\label{eq:closed} \eqn{\dot{x} &= f_{\rm nom}(x,\pi(x,\hat{\theta}),\hat{\theta}) + d \\ \dot {\hat \theta} &= \Gamma(x,\pi(x,\hat\theta),\hat\theta)} \end{subequations}
The safety objective is to design a controller for the closed-loop system~\eqref{eq:closed} that renders the safe set $\mathcal{S} \subset \mathcal{X}$ forward invariant, ensuring $x(t) \in \mathcal{S}, ~ \forall t \ge t_0.$
Due to uncertainty in the system dynamics, the controller cannot render the system safe by only using the nominal dynamics~$f_{\rm nom}$. We introduce the notion of a robust predictive safety-critical controller that relies on the uncertainty quantification of the disturbance~$d$.
\begin{definition}[Robust Predictive Safety-Critical Controller]
    A robust predictive safety-critical controller is a locally Lipschitz mapping $\pi_{\rm r} : \mathcal{X} \times \Theta \times \mathbb{R}_{\ge 0} \to \mathcal{U}$ that, at each time $t_k$, selects a control input $u(t_k)$ such that the safe set $\Scal$ is robustly controlled-invariant over the interval $[t_k, t_k + T_p]$ for the continuous-time system~\eqref{eq:closed},
    for any disturbance realization $d(t)$ satisfying $\sup_{t \in [t_k, t_k + T_p]} \|d(t)\| \le \overline{d}_k$.
\end{definition}

\begin{assumption}[Recursive Feasibility] \label{ass:feasibility}
    For the given prediction horizon $T_p$ and the sequence of conformalized robustness margins $\{\overline{d}_k\}_{k=1}^\infty$, the set of control inputs $u(t_k)$ that maintain the robust controlled-invariance of $\Scal$ is non-empty for all $k > 0$.
\end{assumption}

This class of controllers encompasses single-step safety filters, such as control barrier functions (CBFs) where the prediction horizon is a single step, i.e., $T_p = \Delta t$, as well as multi-step algorithms including robust tube \ac{MPC}~\cite{lopez2019dynamic}, robust model predictive shielding~\cite{li2020robust}, robust \texttt{gatekeeper}~\cite{agrawal2024gatekeeper}, and robust policy CBFs~\cite{knoedler2025safety}.

Finally, we require the following assumption on the rate of change of the disturbance in order to estimate its upper bound using \ac{OCP}.
\begin{assumption} \label{ass:ddot_bound}
    The residual $d(t)$ is Lipschitz continuous in time with Lipschitz constant $L_d > 0$. That is, $\|\dot{d}(t)\| \le L_d$.
\end{assumption}


\subsection{Problem Statement}
Our overall goal is to compute a sequence of upper bounds $\{\overline d_k\}_{k=1}^\infty$ that correctly bounds the disturbance $d(t)$, which will be used in the robust safety-critical controller $\pi_{\rm r}$ to guarantee safety of the system. 
A trivial solution would be to apply a large constant $\overline d_k = \overline d_{\rm const} > 0$ that loosely bounds $d(t)$. However, the performance of the system would be \textbf{highly conservative}. Instead, we aim to synthesize a dynamic robustness margin without restrictive assumptions on the disturbance distribution, which leads to the following problem statement.


\begin{problem} \label{prob}
Synthesize a dynamic robustness margin $\overline{d}_k$ at each time $t_k$ such that it correctly bounds $d(t)$ with a long-run average rate of at least $1-\alpha$:
\begin{equation}
    \lim_{K \to \infty} \frac{1}{K} \sum_{k=1}^K \Pr_\Dcal \Big( \|d(t)\| \le \overline d_k, ~ \forall t \in [t_k, t_k+T_p] \Big) \ge 1 - \alpha \nonumber
\end{equation}
\end{problem}

With a recursively feasible robust safety-critical controller, solving \Cref{prob} implies the safety of the system will be achieved with a long-run average probability of at least $1-\alpha$.


\section{Safe Dynamics Adaptation with Staggered Integral Online Conformal Prediction}
\label{sec:method}
\begin{figure}[t]
    \centering
    \includegraphics[width=1\linewidth]{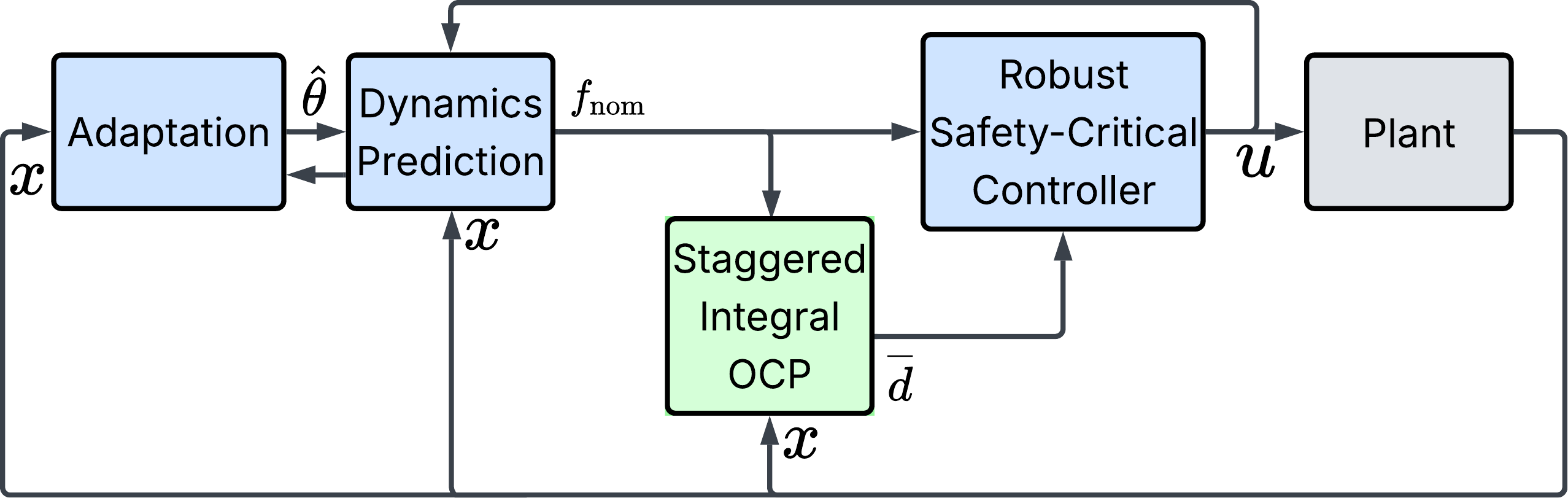}
    \caption{Block diagram of our proposed framework, where we highlight that the \ac{SIOCP} module provides the dynamically adapted estimate of the disturbance upper bound $\overline d_k$ to the robust safety-critical controller.}
    \vspace{-10pt}
    \label{fig:blkdiag}
\end{figure}

In this section, we describe our approach to dynamically quantify the uncertainty of the lumped residual disturbance $d(t)$ to solve~\Cref{prob}. Because the state derivative $\dot{x}$ is unmeasurable, we cannot evaluate the instantaneous disturbance directly. Instead, we quantify the disturbance through its integrated effect on the state trajectory using our method, \textbf{Staggered Integral Online Conformal Prediction (SI-OCP)}, which is a module that quantifies the uncertainty of the dynamics prediction for the robust safety-critical controller, as shown in~\Cref{fig:blkdiag}. 

To bound the uncertainty over the upcoming continuous prediction horizon $t \in [t_k, t_k + T_p]$, the horizon is partitioned into $P = T_p / \Delta t \in \N$ discrete intervals. We utilize a staggered  approach that maintains an array of $P$ independent non-conformity thresholds, updated according to the active thread index $j = k \mod P$. Inspired by integral concurrent learning~\cite{parikh2019integral}, we introduce a non-conformity score function that captures the residual error using online-computable quantities. Because the true integrated disturbance over the horizon $[t_k, t_k + T_p]$ cannot be evaluated until $t_k + T_p$, we construct the score over the previous horizon $[t_k - T_p, t_k]$. At time $t_k$, we define the supremum integral score:
\eqn{ \label{eq:score}
    S_k = \sup_{\tau_1,\tau_2 \in [t_k - T_p, t_k], \tau_1 \le \tau_2}\bigg\| x(\tau_2) - x(\tau_1) \nonumber \\ -\int_{\tau_1}^{\tau_2} f_{\rm nom}\big(x(s),u(s),\hat\theta(s)\big) \mathrm{d}s \bigg\|.
}
We note that the score function can be rewritten as \eqn{S_k =  \sup_{\tau_1,\tau_2 \in [t_k - T_p, t_k], \tau_1 \le \tau_2}\bigg\| \int_{\tau_1}^{\tau_2} d(s) \mathrm{d}s \bigg\|,} where the supremum isolates the integrated disturbance over the sub-interval where it attains its maximal value.

This score serves as the true label for the prediction made by thread $j$ at time $t_k - T_p$. We apply this delayed score in \ac{OCP} to recursively update the active thread's non-conformity threshold $q^{(j)}_k \ge 0$ via sub-gradient descent on the pinball loss:
\begin{equation}
    q_{k+1}^{(j)} = q_k^{(j)} + \eta_k \Big(\mathds{1}\big(S_k > q_k^{(j)}\big) - \alpha\Big) \label{eq:ocp}
\end{equation}

By decoupling the overlapping prediction horizons into $P$ independent threads, the staggered approach allows predictions to be made at each step with multi-step coverage guarantees. This yields an asymptotic distribution-free guarantee that the integrated disturbance over the entire upcoming prediction horizon $[t_k, t_k + T_p]$ is bounded by $q_{k+1}^{(j)}$ with a user-defined error rate $\alpha \in (0,1)$~\cite{angelopoulos2024online}. 

We convert the bound on the predicted integrated disturbance~$q^{(j)}_{k+1}$ to a point-wise bound over the full prediction window $[t_k,t_k+T_p]$, depending on the magnitude of the active threshold $q_{k+1}^{(j)}$ relative to the prediction horizon $T_p$ and the Lipschitz constant $L_d$:
\begin{equation}
    \overline{d}_k = \begin{cases} 
    \sqrt{2L_d q_{k+1}^{(j)}} & \text{if } q_{k+1}^{(j)} < \frac{1}{2}L_d T_p^2 \\ 
    \frac{q_{k+1}^{(j)}}{T_p} + \frac{1}{2}L_d T_p & \text{otherwise }
    \end{cases} \label{eq:margin}
\end{equation}

We then provide the dynamic robustness margin $\overline{d}_k$ to the robust safety-critical controller to compute the safe control input $u(t_k)$. The complete execution is summarized in~\Cref{alg}. Before one full horizon $T_p$ has been completed, we use an initial conservative margin $\overline d_{\rm init} \gg 0$. Thereafter, the active thread $j = k \mod P$ is updated using the score $S_k$.

Now we provide the trajectory coverage guarantee using staggered \ac{OCP} with the integral score function. Unlike the one-step guarantee in~\cite{dixit2023adaptive}, our analysis establishes coverage for the $P$-step target prediction of $S_{k+P}$.

\begin{algorithm}[t]
\small
    \SetKwInOut{Input}{Input}
    \SetKwInOut{Output}{Output}
    \caption{Staggered Integral Online Conformal Prediction (SI-OCP)}
    \label{alg}

    \Input{Target error rate $\alpha$, step size sequence $\{\eta_k\}_{k=1}^\infty$, prediction horizon $T_p$, sampling period $\Delta t$, derivative bound $L_d$, initial conservative margin $\overline{d}_{\rm init} \gg 0$.}
    
    $P \leftarrow T_p / \Delta t$\;
    $q_0^{(i)} \leftarrow q_0 \quad \forall i \in \{0, \dots, P-1\}$\;

    \For{$k$ \text{from} $1$ \text{to} $\infty$}{ 
        Receive state and input trajectories $x(t)$, $u(t)$, and parameter estimates $\hat \theta(t)$, and update history $\mathcal{H}_{T_p}(t_k)$\;
        $j \leftarrow k \mod P$\;

        \eIf{$t_k \ge T_p$}{ 
            $S_k \leftarrow$ \eqref{eq:score}\;
            $q_{k+1}^{(j)} \leftarrow q_k^{(j)} + \eta_k (\mathds{1}(S_k > q_k^{(j)}) - \alpha)$\;
            $q_{k+1}^{(i)} \leftarrow q_k^{(i)} \quad \forall i \neq j$\;
            $\overline{d}_k \leftarrow $ \eqref{eq:margin}\;
        }{
            $q_{k+1}^{(i)} \leftarrow q_k^{(i)} \quad \forall i \in \{0, \dots, P-1\}$\;
            $\overline{d}_k \leftarrow \overline{d}_{\rm init}$\;
        }

        Compute control input $u(t_k)$ via $\pi_{\rm r}(x(t_k), \hat{\theta}(t_k), \overline{d}_k)$\;
        Apply control input $u(t_k)$\;
    }
\end{algorithm}

\begin{theorem} \label{thm:safety_guarantee}
Let Assumptions \ref{ass:feasibility} and \ref{ass:ddot_bound} hold. Under the robust predictive safety-critical controller $\pi_{\rm r}$, let the robustness margin $\overline{d}_k$ be defined piecewise as in \eqref{eq:margin}. The non-conformity threshold $q^{(j)}_{k+1}$ is updated via staggered \ac{OCP} utilizing $P = T_p / \Delta t$ independent threads running sub-gradient descent on the integral score $S_k$, with target error rate $\alpha \in (0,1)$ and step size $\eta_k > 0$. Then, the closed-loop system satisfies the long-run probabilistic safety guarantee:
\begin{equation}
    \lim_{K \to \infty} \frac{1}{K} \sum_{k=1}^K \Pr_\Dcal \Big( x(\tau) \in \mathcal{S}, ~ \forall \tau \in [t_{k}, t_{k} + T_p]\Big) \ge 1 - \alpha \nonumber
\end{equation}
\end{theorem}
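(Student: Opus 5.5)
The argument has three parts: a per-thread coverage statement from staggered \ac{OCP}, a deterministic conversion from a bound on the integral score to a pointwise bound on $d$ via \eqref{eq:margin}, and a final appeal to the robust controller together with \Cref{ass:feasibility}.

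\textbf{Step 1: coverage of each thread.} Fix a thread $i \in \{0,\dots,P-1\}$. It is updated only at the indices $k \equiv i \pmod P$, so along that subsequence \eqref{eq:ocp} is ordinary \ac{OCP} run on the scores $S_k$. Let $S_{k+P}$ denote the score of the window $[t_k, t_k+T_p]$. The threshold $q^{(j)}_{k+1}$ issued at $t_k$ remains the active value of thread $j$ until $t_{k+P}$, where it is compared with exactly this score. To bound the scores, use that $\hat\theta$ stays in a compact set, $f$, $\Delta$ and $F$ are locally Lipschitz, and the trajectory evolves in a bounded region. Then $\|d\|\le \bar D$ for some constant $\bar D$, hence $S_k \le \bar D T_p =: B$. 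Applying \eqref{eq:ocp_bound} to each thread gives
\[
\lim_{K\to\infty}\frac{1}{K}\sum_{k} \mathds{1}\big(S_{k+P}\le q^{(j)}_{k+1}\big) = 1-\alpha .
\]
Averaging the $P$ interleaved subsequences preserves this limit, since each contributes a $1/P$ fraction of the indices and the shifts $k\mapsto k+P$ and the warm-up period are finite offsets. Taking expectations over $\Dcal$ and using bounded convergence turns the empirical frequency into $\frac1K\sum_k\Pr_\Dcal(S_{k+P}\le q^{(j)}_{k+1})\to 1-\alpha$. Exchanging the limit with the expectation is the first point that needs care. Because the \ac{OCP} bound holds deterministically for every sequence, the expected coverage error is also at most $(B+\eta_1)/(\eta_K K)$. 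This requires $\eta_K K\to\infty$, which holds for, e.g., constant step size, and I will state that explicitly.

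\textbf{Step 2: integral bound to pointwise bound.} This is the main obstacle. On the event $S_{k+P}\le q := q^{(j)}_{k+1}$, I must show $\|d(t)\|\le \overline d_k$ for all $t\in[t_k,t_k+T_p]$. Suppose $\|d(t^\ast)\|=M$ at some $t^\ast$ in the window, and set $e = d(t^\ast)/M$. By \Cref{ass:ddot_bound}, $e^\top d(s) \ge M - L_d|s-t^\ast|$. I then choose a subinterval $[\tau_1,\tau_2]$ around $t^\ast$ inside the window and use
\[
S_{k+P} \ge \Big\|\int_{\tau_1}^{\tau_2} d\Big\| \ge \int_{\tau_1}^{\tau_2} e^\top d .
\]
There are two cases.
\begin{itemize}
\item If $M\le L_d T_p$, integrate over a length-$M/L_d$ interval on which the linear lower bound stays nonnegative, placing it on one side of $t^\ast$ if $t^\ast$ is near an endpoint. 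This gives $q\ge M^2/(2L_d)$, i.e.\ $M\le\sqrt{2L_d q}$. This is consistent with the regime $q<\tfrac12 L_d T_p^2$.
\item Otherwise, integrate over the whole window with $|s-t^\ast|\le T_p$ in the worst endpoint case. This gives $q\ge MT_p-\tfrac12 L_dT_p^2$, i.e.\ $M\le q/T_p+\tfrac12 L_dT_p$.
\end{itemize}
I need to check that the case split in \eqref{eq:margin} by $q$ matches the split by $M$, and that the branches agree at $q=\tfrac12L_dT_p^2$: both give $L_dT_p$. The inequalities should also work out if the wrong branch is used, since each branch upper-bounds $M$ in its own regime. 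The endpoint case, where the triangle is one-sided, is where the constants must be verified carefully. The conclusion is the event inclusion $\{S_{k+P}\le q^{(j)}_{k+1}\}\subseteq\{\sup_{[t_k,t_k+T_p]}\|d\|\le\overline d_k\}$.

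\textbf{Step 3: safety.} On that event, \Cref{ass:feasibility} guarantees that $\pi_{\rm r}$ has an admissible input. By the definition of a robust predictive safety-critical controller, $\Scal$ is then invariant over $[t_k,t_k+T_p]$, so $x(\tau)\in\Scal$ there. Monotonicity of probability gives, for each $k$,
\[
\Pr_\Dcal\big(x(\tau)\in\Scal\ \forall\tau\in[t_k,t_k+T_p]\big)\ge \Pr_\Dcal\big(S_{k+P}\le q^{(j)}_{k+1}\big).
\]
Averaging over $k$ and taking the limit via Step 1 gives $\ge 1-\alpha$. If the Ces\`aro limit on the left is not known to exist, I will state the result as a $\liminf$. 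This also solves \Cref{prob} along the way.
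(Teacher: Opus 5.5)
Your proposal is correct and follows essentially the same route as the paper's proof. It projects $d$ onto the direction of $d(t^\ast)$ and uses $\|\dot d\|\le L_d$ to get either the one-sided triangle bound $M^2/(2L_d)$ or the trapezoid bound $MT_p-\tfrac12 L_dT_p^2$, then combines per-thread \ac{OCP} coverage of $\mathds{1}(S_{k+P}\le q^{(j)}_{k+1})$ with the event inclusion into safety under Assumption~\ref{ass:feasibility}. The paper's proof leaves implicit the points you flag: a uniform score bound $B$, the condition $\eta_K K\to\infty$, checking that the $q$-based branch split in \eqref{eq:margin} matches the $M$-based one, and justifying the exchange of limit and expectation; addressing them tightens the same argument rather than changing it.
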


\begin{proof}
Let the evaluation times be defined as $t_k = t_0 + k\Delta t$, where the prediction horizon is partitioned into $P = T_p / \Delta t$ discrete intervals. The \ac{SIOCP} algorithm maintains an array of $P$ independent non-conformity thresholds, updated according to the active thread index $j = k \mod P$.

At time $t_k$, the algorithm evaluates the integral score $S_k$ over the interval $[t_k - T_p, t_k]$. This score serves as the true label for the prediction that was made by thread index $j$ at time $t_k - T_p$.

Let $d^*_{k+P}$ be the true maximum disturbance magnitude realized over the target interval $[t_k, t_{k+P}]$, attained at time $t^* \in [t_k, t_{k+P}]$. Define the unit vector $v = d(t^*) / d^*_{k+P}$ and the scalar projection $g(t) = v^T d(t)$. By \Cref{ass:ddot_bound}, the derivative of the projection is bounded as $|\dot{g}(t)| \le L_d$. Integrating from $t^*$ yields $g(t) \ge d^*_{k+P} - L_d|t - t^*|$.


\textbf{Case 1:} $T_p \ge d^*_{k+P}/L_d$. Integrating the projection over a contiguous subset of length $d^*_{k+P}/L_d$ yields:
\begin{equation}
    S_{k+P} \ge \frac{(d^*_{k+P})^2}{2L_d}
\end{equation}
Assuming the target score is bounded by the active threshold ($S_{k+P} \le q^{(j)}_{k+1}$), isolating $d^*_{k+P}$ establishes the point-wise margin $d^*_{k+P} \le \sqrt{2L_d q^{(j)}_{k+1}}$.

\textbf{Case 2:} $T_p < d^*_{k+P}/L_d$. Integrating over the full prediction horizon yields a truncated trapezoid:
\begin{equation}
    S_{k+P} \ge d^*_{k+P} T_p - \frac{1}{2}L_d T_p^2
\end{equation}
Assuming $S_{k+P} \le q^{(j)}_{k+1}$, isolating $d^*_{k+P}$ establishes the point-wise margin $d^*_{k+P} \le \frac{q^{(j)}_{k+1}}{T_p} + \frac{1}{2}L_d T_p$.

Applying the piecewise definition of $\overline{d}_k$ derived from $q^{(j)}_{k+1}$ guarantees that if the conformal constraint holds ($S_{k+P} \le q^{(j)}_{k+1}$), the point-wise disturbance over the entire upcoming horizon $[t_k, t_k + T_p]$ is unconditionally bounded by $\overline{d}_k$.

Under the robust predictive controller $\pi_{\rm r}$, bounding the full prediction horizon ensures the planned trajectory satisfies $x(\tau) \in \mathcal{S}$ for all $\tau \in [t_k, t_k + T_p]$.  Assuming recursive feasibility holds, we establish the implication:
\begin{equation} \label{eq:safety_implication}
    \mathds{1}(S_{k+P} \le q^{(j)}_{k+1}) \le \mathds{1}\Big(x(\tau) \in \mathcal{S}, ~ \forall \tau \in [t_k, t_k + T_p]\Big)
\end{equation}

Because the $P$ staggered threads operate independently, the standard empirical coverage guarantee \cite[Theorem 1]{angelopoulos2024online} applies to each thread sequence $K_j$:
\begin{equation}
    \lim_{K_j \to \infty} \frac{1}{K_j} \sum_{m=1}^{K_j} \mathds{1}(S_{mP+j} \le q^{(j)}_m) = 1 - \alpha
\end{equation}

Summing the convergent limits across all $P$ staggered sequences yields the global asymptotic guarantee:
\begin{equation}
    \lim_{K \to \infty} \frac{1}{K} \sum_{k=1}^K \mathds{1}(S_{k+P} \le q^{(k \bmod P)}_{k+1}) \ge 1 - \alpha
\end{equation}

Taking the expected value of both sides and applying the relation $\Pr(\cdot) = \mathbb{E}[\mathds{1}(\cdot)]$ yields the probabilistic limit:
\begin{equation}
    \lim_{K \to \infty} \frac{1}{K} \sum_{k=1}^K \Pr_\Dcal(S_{k+P} \le q^{(k \bmod P)}_{k+1}) \ge 1 - \alpha
\end{equation}

Applying the implication from \eqref{eq:safety_implication} substitutes the point-wise conformal bound condition with the continuous trajectory safety condition, completing the proof.
\end{proof}

\begin{remark}
    The asymptotic safety guarantee extends to arbitrary positive step size sequences $\{\eta_k\}$. Following~\cite[Theorem 2]{angelopoulos2024online}, the empirical OCP coverage error is bounded by:
    \begin{equation}
        \left| \frac{1}{K} \sum_{k=1}^K \mathds{1}(S_k \le q_k) - (1 - \alpha) \right| \le \frac{B + \overline \eta_k}{K} \|\Delta_{1:K}\|_1
    \end{equation}
    where $\Delta_1 = \eta_1^{-1}$ and $\Delta_k = \eta_k^{-1} - \eta_{k-1}^{-1}$ for $k \ge 2$, and $\overline{\eta}_K = \max_{1 \le k \le K} \eta_k$ is the maximum step size over the sequence. If $\eta_k$ is adaptively increased $N_K$ times during execution to react to sudden distribution shifts, the sequence variation is bounded by $\|\Delta_{1:K}\|_1 \le \frac{2N_K}{\min_{1 \le k \le K} \eta_k}$. Taking the expectation as in Theorem \ref{thm:safety_guarantee}, the long-run expected probability of safety still converges to $1-\alpha$ provided the number of step size resets grows sublinearly.
\end{remark}

\Cref{thm:safety_guarantee} shows that \Cref{prob} is solved by applying \ac{SIOCP} with our proposed integral score function to provide uncertainty quantification to a robust safety-critical controller. 

\section{Simulation}
\label{sec:sim}
The proposed algorithm is evaluated through a case study involving a 3D quadcopter navigating to a target region while avoiding obstacles under time-varying, complex aerodynamic disturbances. The adaptation module comprises an all-layer adapted \ac{DNN} updated via Self-Supervised Meta Learning (SSML)~\cite{he2025self}. The controller is the Dynamic Tube Model Predictive Control (DTMPC) algorithm, which incorporates the disturbance margin $\overline d_k$ into the robust tube synthesis~\cite{lopez2019dynamic}. We selected the all-layer \ac{DNN} parameter adaptation to showcase our method as it is one of the most complex nonlinear-in-parameter prediction models. We reiterate that our method works for any adaptation law of the form~\eqref{eq:adaptation}, which include simple linear-in-parameter models that are commonly used in adaptive and safety-critical control. 

\subsection{Simulation Setup and System Dynamics}

We consider a goal navigation problem for a quadcopter with unmodeled aerodynamic forces. The system state $x \in \mathbb{R}^8$ and control input $u \in \mathbb{R}^3$ are defined as:
\begin{equation}
    x = [r^\top, v^\top, \phi, \vartheta]^\top, \qquad u = [p, q, T]^\top,
\end{equation}
where $r \in \mathbb{R}^3$ denotes position, $v \in \mathbb{R}^3$ denotes velocity, $\phi$ and $\vartheta$ represent roll and pitch angles, respectively, $p$ and $q$ are body-frame angular rates, and $T$ is the commanded thrust. The dynamics are given by:
\begin{subequations}
\begin{align}
    \dot{r} &= v, \label{eq:pos_dyn} \\
    m\dot{v} &= mg + \begin{bmatrix} \sin\vartheta \\ -\cos\vartheta\sin\phi \\ \cos\vartheta\cos\phi \end{bmatrix} T + \Delta_v(t,x), \label{eq:vel_dyn} \\
    \dot{\phi} &= p, \quad \dot{\vartheta} = q, \label{eq:att_dyn}
\end{align}
\end{subequations}
where $\Delta_v \in \mathbb{R}^3$ is the unmodeled force disturbance. The full-state unmodeled dynamics are $\Delta(t, x) = [0_{1\times 3}, \Delta_v^\top/m, 0_{1\times 2}]^\top$, with system mass $m=1.0$ kg. $g$ denotes gravitational acceleration.

\subsection{Unmodeled Dynamics}

The simulated unmodeled dynamics are characterized by aerodynamic drag as a function of the relative airspeed in the body frame, coupling the vehicle attitude, velocity, and the ambient wind field. Let $v_{\rm wind}(r, t) \in \mathbb{R}^3$ represent a time-varying, spatially dependent wind velocity field:
\begin{equation}
    v_{\rm wind}(r,t) = \begin{bmatrix} 2\sin(0.5t) + \sin(2t) + 0.5r_x \\ 2.4\cos(0.4t) + 1.2\cos(1.8t) + 0.5r_y \\ \sin(0.3t) + 0.2r_z \end{bmatrix}.
\end{equation}
The relative airspeed is defined as $v_{\rm rel} = v - v_{\rm wind}(r, t)$. Let $R(\phi, \vartheta) \in SO(3)$ be the rotation matrix from the body frame to the world frame, yielding the body-frame relative velocity $v_b = R(\phi, \vartheta)^\top v_{\rm rel}$. Aerodynamic drag is parameterized by a diagonal matrix $D_{\rm body} = \operatorname{diag}(0.3, 0.3, 0.6)$. The resulting aerodynamic force in the world frame is:
\begin{equation}
    \Delta_v(t, x) = -m R(\phi, \vartheta) D_{\rm body} (v_b\|v_b\|) + \epsilon,
\end{equation}
where $\epsilon \sim \mathcal{N}(0, \Sigma)$ denotes additive Gaussian noise with standard deviations $\sigma_{x} = \sigma_y = 0.2$ and $\sigma_z = 0.1$. This model is used in the simulation code but is not known by the controller.

\subsection{Neural Network Model and Adaptation}
\begin{figure}[t]
    \centering
    \includegraphics[width=\linewidth]{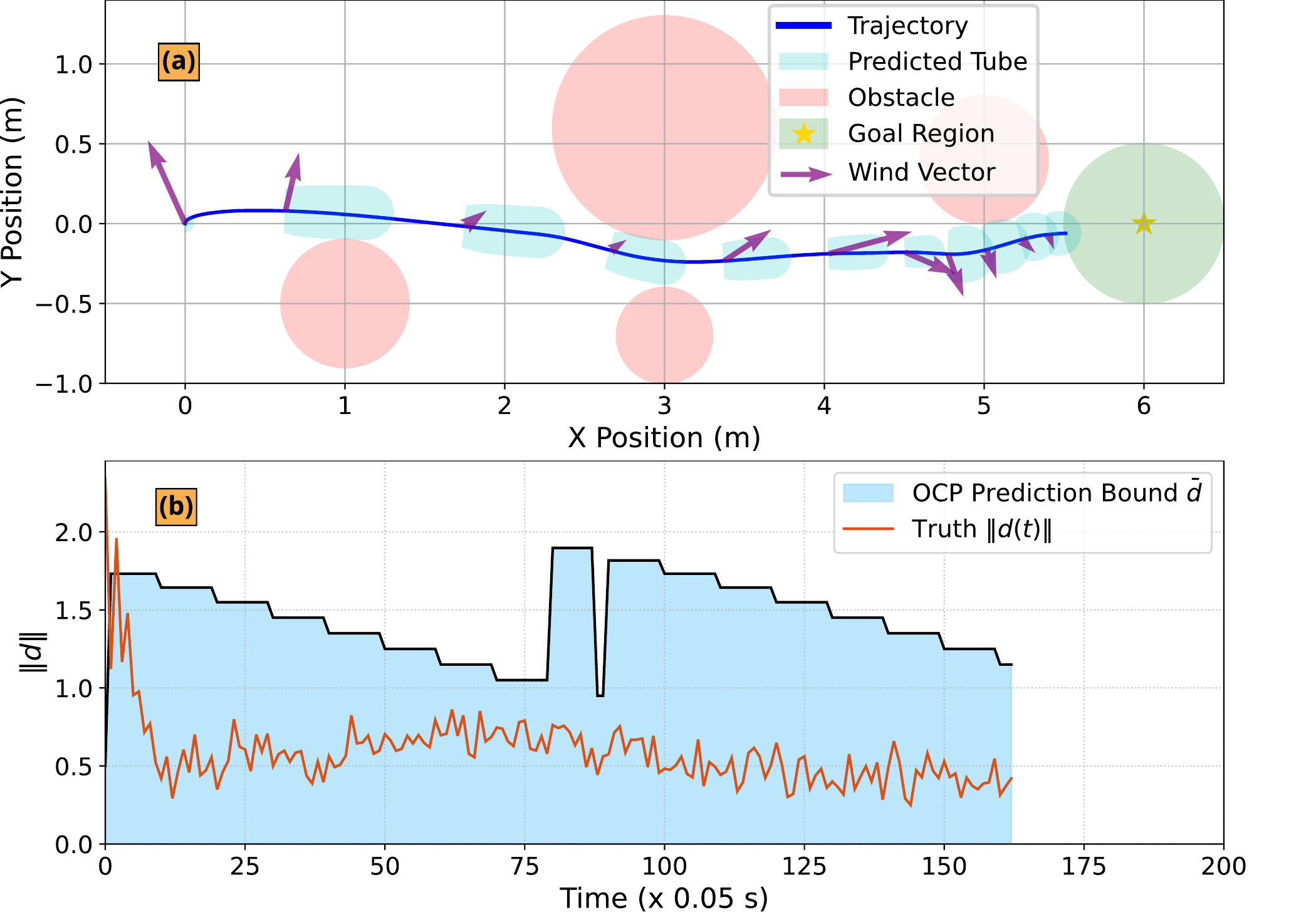}
    \caption{(a) Top-down view of the trajectory of the quadcopter flying around obstacles subject to wind disturbances and unmodeled aerodynamics. The tubes from DTMPC are shown at varying intervals and are narrow enough to pass through the obstacles because \ac{SIOCP} estimates the error tightly. (b) The \ac{SIOCP} predicted error bound correctly bounds the true disturbance 98.77\% of the time, which is above the desired error rate of 90\%.}
    \label{fig:main_fig}
\end{figure}
\begin{figure}[t]
    \centering
    \includegraphics[width=\linewidth]{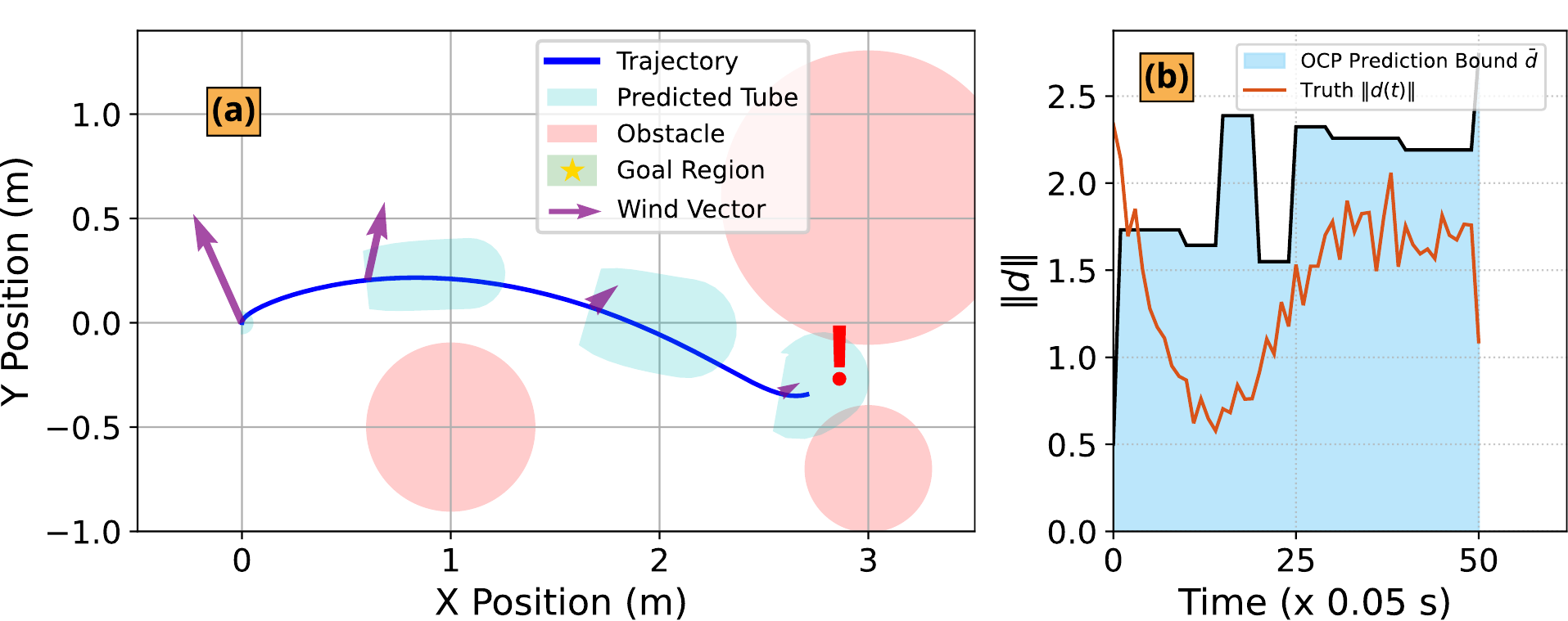}
    \caption{(a) When adaptation is disabled, the true disturbance is larger as the offline-trained weights do not provide a good approximation of the unmodeled dynamics. As such, in (b), the estimated disturbance bound from \ac{SIOCP} covers the true disturbance 94.0\% of the time, which increases the tube size and prevents the quadcopter from flying through the narrow gap, denoted by the red exclamation point. Note that the visual effect of a tube intersecting with a spherical obstacle is due to the fact that the figure is a top-down view of a 3D environment. The tube does not intersect with the sphere in 3D. }
    \label{fig:no_adapt}
\end{figure}
The unmodeled dynamics are predicted online using a 4-layer multi-layer perceptron with ReLU activation functions. The input vector is defined as $\xi = [v_x, v_y, v_z, \phi, \vartheta]^\top \in \mathbb{R}^5$. Hidden layer propagation is defined as:
\begin{align}
    h^{(1)} &= \text{ReLU}(W_1 \xi + b_1), \\
    h^{(l+1)} &= \text{ReLU}(W_{l+1} h^{(l)} + b_{l+1}), \quad \forall l \in \{1,2\},
\end{align}
resulting in the parameterized acceleration $F_{\text{nn}}(\xi, \hat{\theta}) = W_4 h^{(3)} + b_4 \in \mathbb{R}^3$. The layer architecture follows a $5 \to 50 \to 50 \to 50 \to 3$ progression. The parameters $\hat{\theta} \in \mathbb{R}^d$ are updated online to minimize the finite-difference acceleration residual $\varepsilon_{\text{acc}} = (v(t) - v(t-\Delta t))/\Delta t - \dot v_{\text{nom}}$ via the continuous update law:
\begin{equation}
    \dot{\hat{\theta}} = \gamma\,J^\top \varepsilon_{\text{acc}} - \lambda(\hat{\theta} - \theta_0),
\end{equation}
where $\gamma = 5.0$ is the learning gain, $J = \frac{\partial F_{\text{nn}}}{\partial \hat{\theta}}(\xi;\hat{\theta})$ is the Jacobian, and $\lambda = 0.1$ is a regularization parameter relative to the meta-learned prior $\theta_0$. Network parameters are constrained via spectral normalization to $\|\hat{\theta}\|_2 \le 10$. 
Compared to~\cite{he2025self}, we omit the term corresponding to trajectory tracking error that also drives the parameter adaptation, as in our simulation we do not use a reference trajectory.

\subsection{Robust Safety-Critical Control}

The \ac{SIOCP} algorithm synthesizes the residual bound $\overline d_k$ for the \ac{DNN} approximation error and time-varying unmodeled disturbances. With a specified failure rate $\alpha = 0.1$, the resulting margin is utilized by a 10-step-horizon DTMPC operating at $20$ Hz, which generates robust tube trajectories with time-varying tube radii $\Phi(t) \in \Rnonneg$~\cite{lopez2019dynamic}. 
The scenario requires the quadcopter to transition from $r_{\rm start} = [-2.0, 0.0, 1.0]^\top$ m to $r_{\rm goal} = [7.0, 0.0, 1.0]^\top$ m within a $5.0$ s duration. The flight path is obstructed by three spherical obstacles. A pair of obstacles with radii $0.7$ m and $0.3$ m forms a constrained corridor at $r_x = 3.0$ m, limiting feasible tube expansion. A third obstacle with a $0.4$ m radius is positioned at $r_x = 5.5$ m. Altitude is constrained between $0.8$ and $1.2$ m.

The integral \ac{OCP} algorithm quantifies parameter adaptation errors and stochastic disturbances in real time. By utilizing the adaptive margin $\overline{d}_k$, the DTMPC modulates the tube sizing $\Phi$ to maintain safety within the constrained environment while providing robust guarantees against stochastic disturbances and DNN approximation errors.

\subsection{Results}

We consider two scenarios: a nominal case with \ac{DNN} adaptation enabled and an off-nominal case in which adaptation is disabled and the \ac{DNN} is frozen at its offline meta-learned weights.

As shown in \Cref{fig:main_fig}, with adaptation turned on, the \ac{DNN} reduces the approximation error of the unmodeled dynamics and the resulting estimated disturbance bound is lower, allowing the quadcopter to fit through the narrow gap between obstacles with a smaller tube.

In the non-adaptive case in \Cref{fig:no_adapt}, the lumped disturbance is high because the frozen \ac{DNN} does not accurately model the unmodeled dynamics $\Delta$. As such, the estimated disturbance bound captures this and the larger tube size prevents the quadcopter from going through the narrow gap.

The code and animations are available  \href{https://github.com/dcherenson/staggered-integral-ocp}{\textcolor{red}{here.}}\footnote{GitHub: \href{https://github.com/dcherenson/staggered-integral-ocp}{https://github.com/dcherenson/staggered-integral-ocp}}

\section{Conclusion}

This paper proposes Staggered Integral Online Conformal Prediction (SI-OCP), a novel algorithm for quantifying the lumped uncertainty of unmodeled disturbances and adaptive learning errors in the dynamics model. By utilizing a non-conformity score function based on an integral over a rolling window of recorded data, our proposed approach avoids the need for state derivative measurements and provides long-run probabilistic safety guarantees without requiring exchangeability or prior knowledge of the disturbance distribution. The effectiveness of this approach is validated through simulation of an all-layer \ac{DNN} adaptive quadcopter navigating a constrained environment under unmodeled aerodynamics and time- and spatially-varying wind disturbances, demonstrating that the algorithm successfully modulates robust tubes using the \ac{SIOCP}-derived disturbance bound to maintain safety, including a case in which adaptation is disabled and the \ac{DNN} poorly models the dynamics.

Future work will explore methods to alleviate assumptions on the derivative of the disturbance and will incorporate partial observability of the states, which will necessitate quantifying both the disturbance error as well as the state estimation error. Additionally, methods for actively reducing the uncertainty in the model can be implemented as a form of dual control~\cite{naveed2026formal}.

\bibliographystyle{IEEEtran}
\bibliography{biblio, IEEEabrv}
\end{document}